\theoremstyle{plain}
\newtheorem{assumption}{Assumption}
\newtheorem{remark}{Remark}
\newtheorem{proposition}{Proposition}
\theoremstyle{plain}
\newtheorem{theorem}{Theorem}
\theoremstyle{definition}
\newtheorem{definition}{Definition}
\theoremstyle{definition}
\theoremstyle{definition}
\newtheorem{problem}{Problem}
\title{\LARGE \bf
Design and Stability Analysis of a Shared Mobility Market
}
\author{Ioannis Vasileios Chremos, \textit{Student Member, IEEE}, and Andreas A. Malikopoulos, \textit{Senior Member, IEEE}% <-this % stops a space
\thanks{This research was supported by the Sociotechnical Systems Center (SSC) at the University of Delaware.}% <-this % stops a space
\thanks{The authors are with the Department of Mechanical Engineering, University of Delaware, Newark, DE 19716 USA (emails: \tt\small{ichremos@udel.edu}; \tt\small{andreas@udel.edu}.)}
}
\begin{document}

\maketitle
\thispagestyle{empty}
\pagestyle{empty}

\begin{abstract}

In recent years, we have witnessed a remarkable surge of usage in shared vehicles in our cities. Shared mobility offers a future of no congestion in busy city roads with increasing populations of travelers, passengers, and drivers. Given the behavioral decision-making of travelers and the shared vehicles' operators, however, the question is ``how can we ensure a socially-acceptable assignment between travelers and vehicles?" In other words, how can we design a shared mobility system that assigns each traveler to the ``right" vehicle? In this paper, we design a shared mobility market consisted of travelers and vehicles in a transportation network. We formulate a linear programming problem and derive the optimal assignment between travelers and vehicles. In addition, we provide the necessary and sufficient conditions for the stable traveler-vehicle profit allocation. Our goal is twofold: maximize the social welfare of all travelers with the optimal assignment while ensuring the feasibility and stability of the traveler-vehicle profit allocation while respecting the decision-making of both the travelers and the vehicles' operators.

\end{abstract}

\section{Introduction}

\subsection{Motivation}

Shared mobility can provide access to transportation on a custom basis without vehicle ownership \cite{becker2020}. Over the last few years, on-demand ride-sharing services available via our smartphone have proved to be an innovative and adaptive mobility strategy for a broad range of travelers, passengers, and drivers \cite{stocker2018}. Besides the apparent benefits to travelers (e.g., short-term and as-needed mobility access \cite{mourad2019}), shared mobility services have been shown to have a significant environmental and societal impact. For example, reduced vehicle use, ownership, and vehicle miles traveled \cite{shaheen2016}. However, it is the authors' belief that shared mobility can also provide a solution to the social impact of connected and automated vehicles (CAVs), which promise to be an incoming disruptive innovation with vast technological, commercial, and regulatory dimensions \cite{marletto2019}. Although it is clear that CAVs will transform the urban transportation networks and revolutionize mobility \cite{malikopoulos2020}, %\cite{mahbub2020,ntousakis:2016aa,malikopoulos2017,kalle2015a,malikopoulos2020},
we expect CAVs to have social consequences. For example, CAVs may reshape urban mobility in the sense of altered tendency-to-travel and highly increase traffic demand \cite{lamnabhi2017}. To elaborate on this point, evident from similar technological revolutions (e.g., elevators \cite{bernard2014}), human social tendencies and society's perspective can change how a technological development is used and applied. Thus, one fundamental question that we need to ask ourselves is whether the deployment of CAVs in society will give rise to unexpected outcomes. For example, will the overall vehicle miles traveled increase to the point where we observe a decrease in traveler usage of public transit? Shared mobility can be a cost-effective and flexible mode of transportation alongside CAVs and provide mobility access to city travelers without increasing congestion, pollution, accidents, and energy consumption.

In this paper, we address shared mobility by designing a \emph{shared mobility market}. This market is consisted of a finite number of travelers and vehicles, and it is managed by a social planner. Our goal is to measure the ``benefit" received of both the travelers and the vehicles' operators, define the social welfare as a function of these benefits, and form a maximization problem with integer solutions subject to physically-related constraints. From a game-theoretic perspective, our proposed shared mobility market can be interpreted as an ``assignment game," in which indivisible goods are exchanged between two parties for money \cite{shapley1971}.

One may ask why use game theory to analyze such a problem. Emerging mobility systems (e.g., CAVs, shared mobility, electric vehicles) will be characterized by their socio-economic complexity: (i) improved productivity and energy efficiency, (ii) widespread accessibility, and (iii) drastic urban redesign and evolved urban culture. This characteristic can be naturally modeled and analyzed using notions from social choice theory and game theory. One of the main arguments in this paper is that the interaction between travelers and shared vehicles can be modeled as a market, in which we can find the socially-acceptable equilibrium by ensuring to take into account the most important factors that influence the travelers' decision-making. It is for this reason why the authors of this paper argue that game-inspired markets offer a complimentary analysis of decision-making in shared mobility systems.

Our current work is supported by previous work that investigated different yet related problems. Using methods and techniques from game theory, mechanism design, and insights from behavioral economics, in previous work, we modeled the human social interaction with vehicles as a social dilemma in a game-theoretic approach \cite{chremos2020_ITSC}. We investigated the \emph{social-mobility dilemma}, i.e., the binary decision-making of travelers between commuting with a vehicle or using public transportation. In \cite{chremos2020_CDC}, using mechanism design, we modeled the travel time of selfish travelers in a transportation network with vehicles as a \emph{social resource allocation problem}. We solved a social-welfare maximization problem by eliciting the private information (e.g., preferred travel time) of each traveler, and by considering a Nash-implementation approach, we showed that our informationally decentralized mechanism efficiently allocates travel time to all travelers that seek to commute in the network.

\subsection{Literature Review}

%Analytical frameworks have been proposed to quantify and evaluate the impacts of CAVs from the technological perspective only \cite{malikopoulos2018}. Furthermore, coordination of CAVs at different traffic scenarios, e.g., intersections or vehicle-following, have been extensively evaluated in the literature \cite{malikopoulos2020}.

In recent years, it has been recognized in the literature that further research is required to identify and understand the potential impacts of emerging mobility \cite{sarkar2016,zhao2020}. Shared mobility in emerging mobility systems has been investigated and studied extensively the last decade. Factors that motivate active research in shared mobility systems are the significant energy savings \cite{zhao2020}, the limited importance of parking, and thus, opportunities for urban redesign with more space, and the increased demand for mobility access in developing countries \cite{shaheen2007}. Even though the promises of shared mobility have been realized with the implementation of various ride-sharing, or car-sharing, programs and initiatives, there are still open questions on how to design a shared mobility system that is socially-acceptable and profitable. Standard techniques of optimization and dynamics pricing have been used to control shared vehicle traffic and the non-strategic behavior of travelers and passengers \cite{pfrommer2014,waserhole2016}. These methods focus primarily on formulating and solving a dynamic or stochastic optimization problem with respect to variables that include preferred and expected departure, arrival, and in-vehicle travel. One can control the solution by designing pricing schemes that the travelers, or passengers, react in a predictable manner (travelers are assumed to be price-takers).

There have been different approaches reported in the literature to study shared mobility using ideas from game theory. In particular, game theory has been used to model and analyze non-cooperative, or cooperative, interactions of travelers who seek to accommodate their desired origin-destination commutes through ride-sharing. The authors in \cite{bistaffa2017} modeled a shared mobility system connected with a social network in which travelers could communicate and arrange one-time rides. Their focus was to minimize travel cost. Assignment games have been used to match sets of travelers with sets of capacitated routes in a transportation network \cite{rasulkhani2019}. In contrast to game-theoretic techniques, other efforts used a Vickrey–Clarke–Groves-inspired mechanism to design a first-mile, ride-sharing mobility system matching selfish travelers to vehicles \cite{bian2019}. The proposed mechanism was shown to be incentive compatible, individually rational, and price non-negative. In most cases, however, traveler, or passenger, behavior has not been well-understood. This is so, especially, in relevance to the impact that human drivers, passengers, and travelers might have on the traffic and energy efficiency of a mobility system. A very recent study on ``social dilemmas" attempted to remedy this lack of understanding on the social impact of shared mobility \cite{rapoport2019}. Informally, a social dilemma is any situation where there is a subtle yet unwanted discrepancy between individual and collective interest. The authors provided both a theoretical and an experimental study of how the strategic decision-making of travelers can impact the shared mobility's welfare, and thus, efficiency. A thorough review on ride-sharing can be found in \cite{furuhata2013} and the references therein.

Close in spirit to our approach is the long literature on stable matching problems. The seminal work in \cite{gale1962} was the first to study the ``marriage problem" for one-to-one matching, and the ``college admissions problem" for many-to-one matching. In less that a decade later, the authors in \cite{shapley1971} designed the so-called ``assignment game," which provides cooperative solutions for matching problems, where the players have transferable utilities. They achieved this by formulating and solving a linear optimization problem and by making sure that the solution is an equilibrium such that no one from the game's players prefers to deviate. A comprehensive outlook of assignments games can found in \cite{roth1990} and the references therein. More recently, an interesting development was reported in \cite{anshelevich2013}, where the provide bounds in the solutions of the assignment game that can result in stable mappings.

%Quite a different approach is the application of mechanism design to provide a solution to individual route selection under different congestion traffic scenarios. The theory of mechanism design was developed for the implementation of system-wide optimal solutions to problems involving multiple rational agents, each with private information about preferences and with conflicting interests \cite{mas_colell1995}. It can be viewed as the art of designing the rules of an induced game to achieve a specific desired outcome. Applications in engineering include communication networks \cite{renou2012}, multi-agent systems \cite{shoham2008}, and resource allocation problems \cite{hurwicz2006}.

\subsection{Contribution of the Paper}

The main contribution of this paper is the design of a shared mobility market for the stable assignment of travelers to shared vehicles. By stable we mean that, considering the decision-making of both travelers and vehicles' operators, no other assignment is preferred. We formulate a linear optimization problem and we show that our shared mobility market can produce optimal assignments with feasible and stable traveler-vehicle profit allocation. For the latter, we also give the necessary and sufficient conditions when stability can be guaranteed.

\subsection{Organization of the Paper}

The paper is structured as follows. In Section \ref{SEC:formulation}, we present the mathematical formulation of our shared mobility market, which forms the basis for the rest of the paper. In Section \ref{SEC:properties}, we provide a feasibility and stability analysis of the shared mobility market and finally, in Section \ref{SEC:conclusion}, we draw conclusions and offer a discussion of future research.

\section{Mathematical Formulation}\label{SEC:formulation}

We consider a mobility system managed by a social planner whose objective is to assign $m \in \mathbb{N}$ vehicles to $n \in \mathbb{N}$ travelers, where $n \geq m$. We denote the set of travelers by $\mathcal{I} = \{0\} \cup \{1, 2, \dots, n\}$ and the set of vehicles by $\mathcal{J} = \{1, 2, \dots, m\}$. In $\mathcal{I}$, the index $0$ has no practical meaning other than helping us to assign any vehicles that have not been assigned to travelers. Travelers seek to travel in a transportation network represented by a directed multi-graph $\mathcal{G} = (\mathcal{V}, \mathcal{E})$, where each vertex in $\mathcal{V}$ represents a different city area, or neighborhood, and each edge $e \in \mathcal{E}$ represents a city road connection. In this network, an arbitrary traveler $i \in \mathcal{I}$ wants to travel from their current location $o_i \in \mathcal{V}$ to their self-chosen destination $d_i \in \mathcal{V}$. So, we say that traveler $i \in \mathcal{I}$ is associated with an origin-destination pair $(o_i, d_i)$. Similarly, each vehicle is associated with a route, i.e., a specific sequence of edges. Hence, the social planner aims to assign all the travelers that their $(o_i, d_i)$ can be satisfied by the vehicle's route.

%\textcolor{blue}{We call an origin-destination pair $(o_k, d_k) \in (\mathcal{V} \times \mathcal{V})$ a commodity, and the set of all such commodities $\mathcal{K}$. Next, each commodity $k = (o_k, d_k) \in \mathcal{K}$ includes more than one traveler and multiple vehicles can be assigned to satisfy the needs of each traveler. On the other hand, each traveler can only be assigned to one vehicle. A traveler's needs are satisfied by a vehicle $j \in \mathcal{J}$, which operates on a route $p_j = \{e \in \mathcal{E}: (o_i, d_i)\} \subset \mathcal{E}$ in the network ($p_j$ for each $j \in \mathcal{J}$ are disjoint). In other words, each vehicle is directly associated with a route, i.e., a sequence of edges.}

%Intuitively, the cost is a function of multiple travel costs that include fares, wait time, access time, in-vehicle time, etc. This cost may be different for one user than another assigned to the same route.

%The cost is fixed (there is no crowding effect on the line, just a hard capacity).

\begin{definition}
    The \emph{traveler-service assignment} is a vector $\mathbf{a} = (a_{1 1}, \dots, a_{i j}, \dots, a_{n m}) = (a_{i j})_{{i \in \mathcal{I}}, j \in \mathcal{J}}$, where $a_{i j}$ is a binary variable of the form:
        \begin{equation}
            a_{i j} =
                \begin{cases}
                    1, \; & \text{if $i \in \mathcal{I}$ is assigned to $j \in \mathcal{J}$}, \\
                    0, \; & \text{otherwise}.
                \end{cases}
        \end{equation}
\end{definition}

A traveler $i$'s satisfaction is represented by a valuation function $v_i(a_{i j}) \in [\underline{v}_i, \bar{v}_i]$ when assigned to vehicle $j \in \mathcal{J}$, where $\underline{v}_i \in \mathbb{R}_{\geq 0}$ represents the lower bound of traveler $i$'s satisfaction, and $\bar{v}_i \in \mathbb{R}_{\geq 0}$ represents the upper bound of traveler $i$'s satisfaction. Intuitively, a traveler's satisfaction reflects the traveler's value of the service they expect to receive from a vehicle $j \in \mathcal{J}$. 

The satisfaction $v_i(\cdot)$ can be defined in terms of several factors (e.g., preferred and experienced number of co-travelers, in-vehicle travel time, or pickup time) that measure how satisfied the traveler can be with vehicle $j \in \mathcal{J}$. For example, a traveler can have a preferred travel time and their satisfaction can measure the monetary value of the difference between preferred and experienced travel time. So, the dis-utility caused by vehicle $j \in \mathcal{J}$ to traveler $i \in \mathcal{I}$ is given by $\phi_i(a_{i j}) \in \mathbb{R}_{\geq 0}$. We call $\phi_i(\cdot)$ the \emph{inconvenience cost} as it can measure the travel inconvenience caused to a traveler. Thus, we have
    \begin{equation}\label{EQN:valuation-function}
        v_i(a_{i j}) = \bar{v}_i - \phi_i(a_{i j}).
    \end{equation}
where $\bar{v}_i$ is the upper bound of traveler $i$'s satisfaction. Although our analysis will treat $v_i(a_{i j})$ in its most general form \eqref{EQN:valuation-function}, one can explicitly define $v_i(a_{i j})$ as follows, 
    \begin{equation}
        v_i(a_{i j}) =
            \begin{cases}
                \bar{v}_i, & \text{if } \phi_i = 0, \\
                \lambda_i \cdot \bar{v}_i, & \text{if } \phi_i = (1 - \lambda_i) \cdot \bar{v}_i, \\
                0, & \text{if } \phi_i = \bar{v}_i,
            \end{cases}
    \end{equation}
where $\lambda_i \in (0, 1)$ is a discount rate.

Next, the total utility of traveler $i \in \mathcal{I}$ is given by
    \begin{equation}\label{EQN:utility-function}
        u_i(a_{i j}) = v_i(a_{i j}) - t_i(a_{i j}),
    \end{equation}
where $t_i \in \mathbb{R}_{> 0}$ is the monetary payment that traveler $i \in \mathcal{I}$, e.g., a fare that traveler $i \in \mathcal{I}$ may make for the services of vehicle $j \in \mathcal{J}$.
%Intuitively, $\tau_i$ represents the fact that each traveler $i \in \mathcal{I}$ can afford a different payment for a vehicle's service than some other traveler.
Apparently, \eqref{EQN:utility-function} establishes a ``quasi-linear" relationship between a traveler's satisfaction and payment, both measured in monetary units.

%Next, we provide two key definitions of our formulation: one that limits the number of travelers that can use any vehicle, and one that defines ``social welfare" of the proposed mobility market.

\begin{definition}\label{DEFN:vehicle-capacity}
    For each vehicle $j \in \mathcal{J}$, the \emph{vehicle maximum capacity} $\varepsilon_j \in \mathbb{N}$ yields how many travelers can receive a ride from vehicle $j \in \mathcal{J}$.
\end{definition}

%In other words, Definition \ref{DEFN:vehicle-capacity} imposes a physical traveler limit for each vehicle, and so each vehicle can only satisfy a given fixed number of ``passengers."

\begin{definition}
    The \emph{social welfare} of the shared mobility market is the collective summation of all travelers' utilities, i.e., $W(\mathbf{a}) = \sum_{i \in \mathcal{I}} u_i(a_{i j})$.
\end{definition}

As we will see in Subsection \ref{SUBSEC:problem-formulation}, our objective is to maximize the social welfare.

\begin{definition}\label{DEFN:operating-cost}
    Let the operating cost of vehicle $j \in \mathcal{J}$ denoted by $c_j \in \mathbb{R}_{> 0}$, which is shared (not necessarily equally) by each traveler $i \in \mathcal{I}$ assigned to vehicle $j \in \mathcal{J}$ as follows:
        \begin{equation}
            c_j = \sum_{i \in \mathcal{I} \setminus \{0\}} c_{i j}(a_{i j}),
        \end{equation}
    where $c_{i j}(a_{i j})$ is traveler $i$'s share of the operating cost of vehicle $j \in \mathcal{J}$.
\end{definition}

So far, we have described how the shared mobility market works to assign travelers to shared vehicles. Next, we explicitly define the ``end" of our market in terms of monetary payments and net profits for both the travelers and the vehicles.

\begin{definition}\label{DEFN:traveler-vehicle-profits}
    At the end of travel, each traveler is asked to make a payment $t_i(a_{i j})$ for the service of vehicle $j \in \mathcal{J}$ (e.g., a share-mobility fare). The monetary net profit $\rho_{i j}(a_{i j})$ of vehicle $j \in \mathcal{J}$ from traveler $i \in \mathcal{I}$ is given by
        \begin{equation}\label{EQN:vehicle-net-profit}
            \rho_{i j}(a_{i j}) = t_i(a_{i j}) - c_{i j}(a_{i j}).
        \end{equation}
    On the other hand, the monetary net profit of traveler $i \in \mathcal{I}$ is
        \begin{equation}\label{EQN:traveler-net-profit}
            \pi_{i j}(a_{i j}) = v_i(a_{i j}) - t_i(a_{i j}) - \underline{v}_i,
        \end{equation}
    We call $\left( \pi_{i j}(a_{i j}), \rho_{i j}(a_{i j}) \right)_{i \in \mathcal{I}, j \in \mathcal{J}}$ the \emph{traveler-vehicle profit allocation}.
\end{definition}

\begin{remark}
    Naturally, \eqref{EQN:vehicle-net-profit} gives the net profit of a vehicle $j \in \mathcal{J}$ generated by one traveler $i \in \mathcal{I}$ as the difference between the monetary payment $t_i$ (e.g., fare) made by the traveler, and the traveler's share of the operating cost, $c_{i j}$. In a similar line of arguments, in \eqref{EQN:traveler-net-profit} the net profit of traveler $i \in \mathcal{I}$ is the difference between what they are willing to pay, $v_i$, what they actually pay, $t_i$, and the minimum accepted value that they expect to get from vehicle $j \in \mathcal{J}$.
\end{remark}

%In Definition \ref{DEFN:traveler-vehicle-profits}, the component $\gamma \cdot c_{i j}(a_{i j})$ represents traveler $i$'s expectation of the modified value of the operating cost for the service of vehicle $j \in \mathcal{J}$. Intuitively, as each vehicle has an operator (either a driver or a central computer) it is natural for traveler $i \in \mathcal{I}$ to expect to make a payment that slightly exceeds the operating cost and includes a compensation for the operator.

Next, following a similar notion from \cite{sotomayor1992}, we define when the traveler-vehicle profit allocation $\left( \pi_{i j}(a_{i j}), \rho_{i j}(a_{i j}) \right)$ for each traveler $i \in \mathcal{I}$ and each vehicle $j \in \mathcal{J}$ is feasible.

\begin{definition}\label{DEFN:feasible-outcomes}
    Let $\widehat{\mathcal{J}} \subseteq \mathcal{J}$ denote the set of all vehicles that are actually assigned to travelers. We say $\left( \pi_{i j}(a_{i j}), \rho_{i j}(a_{i j}) \right)_{i \in \mathcal{I}, j \in \mathcal{J}}$ is feasible if (1) for all vehicles $j \in \widehat{\mathcal{J}}$, both the traveler's and vehicle's net profit are nonnegative, i.e., $\pi_{i j}(a_{i j}), \rho_{i j}(a_{i j}) \geq 0$; (2) the net profit of any traveler $i \in \mathcal{I}$ assigned to any vehicle $j \in \mathcal{J}$ and its net profit is equal to the total utility of traveler $i \in \mathcal{I}$ minus the operating cost of vehicle $j \in \mathcal{J}$, i.e.,
        \begin{equation}\label{EQN:feasibility-equation}
            \pi_{i j}(a_{i j}) + \rho_{i j}(a_{i j}) = u_i(a_{i j}) - c_{i j}(a_{i j}); 
        \end{equation}
    (3) for all unassigned vehicles $j \in \mathcal{J} \setminus \widehat{\mathcal{J}}$, $\rho_{i j}(a_{i j}) = 0$; and (4) for any traveler $i \in \mathcal{I}$ left unassigned, $\pi_{i j}(a_{i j}) = 0$.
\end{definition}

\begin{definition}\label{DEFN:stability}
    %Let $\mathcal{H}_j$ be the set of travelers that can be feasibly assigned to vehicle $j \in \mathcal{J}$.
    %Let the assignment between traveler $i \in \mathcal{I}$ and vehicle $j \in \mathcal{J}$ be $a_{i j} = 0$. Then,
    A feasible traveler-vehicle profit allocation $\left( \pi_{i j}(a_{i j}), \rho_{i j}(a_{i j}) \right)_{i \in \mathcal{I}, j \in \mathcal{J}}$ is stable if for all $i \in \mathcal{I}$
        \begin{equation}\label{EQN:defn-2}
            % \pi_{i j}(a_{i j}) + \rho_{i j}(a_{i j}) \geq u_i(a_{i j}) - c_{i j}(a_{i j}).
            u_i(a_{i j}) - c_{i j}(a_{i j}) \geq u_i(a_{i j} ') - c_{i j}(a_{i j} '),
        \end{equation}
        for any assignment $a_{i j} '$.
\end{definition}

In other words, Definition \ref{DEFN:stability} implies that for any traveler $i$ and any vehicle $j$ that are not assigned together, if $u_i(a_{i j}) - c_{i j}(a_{i j}) < u_i(a_{i j} ') - c_{i j}(a_{i j} ')$, then neither traveler $i$ or vehicle $j$ would be satisfied with that assignment. If we can eliminate those cases, then the traveler-vehicle profit allocation is socially-acceptable and no traveler, or vehicle, will seek to deviate.

%In other words, Definition \ref{DEFN:stability} implies that constraint \eqref{CSTR:prob1-2} is satisfied. Note that $\rho_{i j}(a_{i j})$ depends on traveler $i \in \mathcal{I}$, since $\rho_{i j}(a_{i j}) = \sum_{i \in \mathcal{I}} \big(p_{i j} - c_{i j}\big)$.

\subsection{Assumptions}

In our modeling framework to design the shared mobility market we impose the following assumptions.

\begin{assumption}\label{ASM:no-alternatives}
    All travelers participate in the market since sharing a vehicle is the only commute option.
\end{assumption}

We impose Assumption \ref{ASM:no-alternatives} in our modeling framework since the focus is on identifying the best assignment between travelers and shared vehicles. By including alternative commute options, we would just add complexity in our analysis without any compelling reason. However, in future work, we plan to relax this assumption, and allow travelers to have multiple commute options using different modes of transportation to reach their destination in the network.

\begin{assumption}\label{ASM:all-monetary}
    The travel satisfaction or costs of any traveler's utility is represented in monetary units. Also, we have $u_0(a_{0 j}) = c_j$ for any vehicle $j \in \mathcal{J}$.
\end{assumption}

Although Assumption \ref{ASM:all-monetary} allows us to simplify the mathematical modeling, it is also natural in a realistic market of shared mobility to assume that all valuations and transactions between travelers and vehicles are done using money. We assume $u_0(a_{0 j}) = c_j$ for any vehicle $j \in \mathcal{J}$ to ensure the vehicle's operating cost is covered by an assignment with a traveler.

%For example, the valuation $v_i(a_{i j})$ of traveler $i \in \mathcal{I}$ can represent the maximum willingness to pay for vehicle $j$'s service.

\begin{assumption}\label{ASM:total-operating-cost}
    The total operating cost of all vehicles $\sum_{j \in \mathcal{J}} c_j$ remains fixed.
\end{assumption}

Assumption \ref{ASM:total-operating-cost} implies that the operating cost of all vehicles can not really altered in a long run while accommodating the travelers' desired origin-destination requests. In other words, the travel-service assignments can not really alter the total operating cost of all vehicles.

\subsection{Problem Formulation}\label{SUBSEC:problem-formulation}

\begin{problem}\label{PROB:centralized-problem}
    The optimization problem formulation of the shared mobility market is
        \begin{gather}
            \max_{a_{i j}} W(\mathbf{a}) = \max_{a_{i j}} \sum_{i \in \mathcal{I}} u_i(a_{i j}), \label{EQN:prob1-objective} \\
            \text{subject to:} \notag \\
            \sum_{j \in \mathcal{J}} a_{i j} \leq 1, \quad \forall i \in \mathcal{I}, \label{CSTR:prob1-1} \\
            \sum_{i \in \mathcal{I}} a_{i j} \leq \varepsilon_j, \quad \forall j \in \mathcal{J}, \label{CSTR:prob1-2} %\\
            %\sum_{i \in \mathcal{I} \setminus \{0\}} a_{i j} \leq \beta \cdot (1 - a_{0 j}), \quad j \in \mathcal{J}, \label{CSTR:prob1-3}
        \end{gather}
    where \eqref{CSTR:prob1-1} ensures that each traveler $i \in \mathcal{I}$ is assigned to only one vehicle $j \in \mathcal{J}$, and \eqref{CSTR:prob1-2} ensures that the vehicle maximum capacity is not exceeded while the vehicle shared by travelers.
    % and \eqref{CSTR:prob1-3} with $\beta \in \mathbb{R}_{\gg 0}$ ensures that any vehicles left unassigned to travelers will instead be assigned to the index $0$.
\end{problem}

%   Before we present the optimization problem formulation of our proposed shared mobility market, it is important to note that each traveler can be assigned to only one vehicle and each vehicle can be shared by more than one travelers.  

\begin{remark}
    We note that the solution of Problem \ref{PROB:centralized-problem} will always assign a vehicle that can satisfy the origins and destinations of all the travelers that are assigned to it.
\end{remark}

%Note that we have not used the operating cost in Problem \ref{PROB:centralized-problem}. We show in Proposition \ref{PROP:equivalence} that our utility function is equivalent to the utility function in \cite{shapley1971}.

%\begin{definition}
%    Let the optimal assignment be $\mathbf{a} ^ *$. Then, the set of travelers assigned to vehicles is $\mathcal{I}$, similarly, the set of vehicles assigned to travelers is $\mathcal{J}(\mathbf{a} ^ *)$.
%\end{definition}

\section{Main Results}\label{SEC:properties}

%   For each traveler $i \in \mathcal{I}$ the total utility is given by
%   \begin{equation}
%   u_{i j} = v_{i j} - t_i(a_{i j}) - c_{i j},
%   \end{equation}
%   where $v_i(a_{i j})$ is the satisfaction of traveler $i \in \mathcal{I}$, $t_i(a_{i j})$ is the travel cost, and $c_{i j}$ is the operating cost of vehicle $j \in \mathcal{J}$ assigned to traveler $i \in \mathcal{I}$.

\begin{theorem}\label{PROP:equivalence}
    Let $\mathbf{a} ^ *$ denote the optimal assignment of Problem \ref{PROB:centralized-problem}. Then, the objective function \eqref{EQN:prob1-objective} evaluated at $\mathbf{a} ^ *$ is mathematically equivalent to the classic maximization of the social welfare at $\mathbf{a} ^ *$ with utility function defined as
        \begin{equation}
            u_i(a_{i j}) = v_i(a_{i j}) - t_i(a_{i j}) - c_{i j}(a_{i j}),
        \end{equation}
    where $v_i(a_{i j})$ is the satisfaction of traveler $i \in \mathcal{I}$, $t_i(a_{i j})$ is the monetary payment made by traveler $i \in \mathcal{I}$ for using vehicle $j \in \mathcal{J}$, and $c_{i j}(a_{i j})$ is the operating cost of vehicle $j \in \mathcal{J}$ assigned to traveler $i \in \mathcal{I}$.
    %In other words, Problem \ref{PROB:centralized-problem} is equivalent to the classic assignment model in \cite{shapley1971}.
\end{theorem}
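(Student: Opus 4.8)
The plan is to show that the objective \eqref{EQN:prob1-objective} and the ``classic'' social welfare differ only by an additive constant over the common feasible region defined by \eqref{CSTR:prob1-1}--\eqref{CSTR:prob1-2}, so that any maximizer of one is automatically a maximizer of the other. First I would form the difference of the two welfares. Writing the classic utility as $u_i'(a_{i j}) = v_i(a_{i j}) - t_i(a_{i j}) - c_{i j}(a_{i j})$ and recalling from \eqref{EQN:utility-function} that $u_i(a_{i j}) = v_i(a_{i j}) - t_i(a_{i j})$, the two welfares satisfy
\begin{equation}
\sum_{i \in \mathcal{I}} u_i(a_{i j}) - \sum_{i \in \mathcal{I}} u_i'(a_{i j}) = \sum_{i \in \mathcal{I}} c_{i j}(a_{i j}).
\end{equation}
The whole argument therefore reduces to proving that the right-hand side is the same for every feasible $\mathbf{a}$.

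Next I would evaluate $\sum_{i \in \mathcal{I}} c_{i j}(a_{i j})$ by regrouping the sum over vehicles instead of travelers. Constraint \eqref{CSTR:prob1-1} guarantees that each traveler contributes a cost share to at most one vehicle, so the sum may be rewritten as $\sum_{j \in \mathcal{J}} \sum_{i \in \mathcal{I}} c_{i j}(a_{i j})$ without double counting. For the genuine travelers $i \in \mathcal{I} \setminus \{0\}$, Definition \ref{DEFN:operating-cost} yields $\sum_{i \in \mathcal{I} \setminus \{0\}} c_{i j}(a_{i j}) = c_j$ for every vehicle actually serving travelers, i.e., the shares always recover that vehicle's full operating cost. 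For the dummy index $i = 0$, Assumption \ref{ASM:all-monetary} charges the full operating cost of any vehicle left unassigned through $u_0(a_{0 j}) = c_j$, so the dummy index absorbs exactly $c_j$ for each such vehicle.

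Combining the two cases, every vehicle $j \in \mathcal{J}$---whether serviced by real travelers or absorbed by the dummy index---contributes precisely $c_j$, so that $\sum_{i \in \mathcal{I}} c_{i j}(a_{i j}) = \sum_{j \in \mathcal{J}} c_j$. By Assumption \ref{ASM:total-operating-cost} this total operating cost is fixed and hence independent of $\mathbf{a}$. Consequently the two objectives differ by the constant $\sum_{j \in \mathcal{J}} c_j$ on the same feasible set, so they are maximized by the same assignment; in particular $\mathbf{a}^*$ optimizes both, which is the desired equivalence.

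The step I expect to be the main obstacle is the precise bookkeeping of the dummy index $i = 0$: I must ensure that the cost shares of the real travelers together with the dummy contribution cover each vehicle's operating cost exactly once, with no vehicle omitted and none double counted, so that the regrouped sum collapses cleanly to $\sum_{j \in \mathcal{J}} c_j$ rather than an assignment-dependent partial sum. Once this accounting is pinned down, the invariance granted by Assumption \ref{ASM:total-operating-cost} and the ``differ by a constant'' conclusion follow immediately.
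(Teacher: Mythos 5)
Your proposal is correct and follows essentially the same route as the paper's proof: both arguments use Assumption \ref{ASM:all-monetary} to let the dummy index $0$ absorb the operating cost of the unassigned vehicles, Definition \ref{DEFN:operating-cost} to reconstruct each assigned vehicle's cost $c_j$ from the travelers' shares $c_{i j}(a_{i j})$, and Assumption \ref{ASM:total-operating-cost} to discard the resulting total $\sum_{j \in \mathcal{J}} c_j$ as an additive constant that cannot affect the maximizer. The only cosmetic difference is that you phrase the conclusion as ``the two objectives differ by a constant,'' whereas the paper rewrites one objective into the other step by step.
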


\begin{proof}
    By Assumption \ref{ASM:all-monetary}, we can write the objective function \eqref{EQN:prob1-objective} of Problem \ref{PROB:centralized-problem} as follows
        \begin{equation}\label{EQN:prop1-first}
            \max_{a_{i j}} \sum_{i \in \mathcal{I}} u_i(a_{i j}) = \max_{a_{i j}} \sum_{i \in \mathcal{I} \setminus \{0\}} u_i(a_{i j}) + \sum_{j \in \mathcal{J}} c_{0 j}(a_{0 j}),
        \end{equation}
    where the term $\sum_{j \in \mathcal{J}} c_{0 j}(a_{0 j})$ represents the total operating cost of all the vehicles that are unassigned to travelers, and can be written as
        \begin{equation}\label{EQN:prop1-second}
            \sum_{j \in \mathcal{J}} c_{0 j}(a_{0 j}) = \sum_{j \in \mathcal{J}} c_j - \sum_{j \in \widehat{\mathcal{J}}} c_j.
        \end{equation}
    Substituting \eqref{EQN:prop1-second} into \eqref{EQN:prop1-first} yields
        \begin{multline}\label{EQN:prop1-third}
            \max_{a_{i j}} \sum_{i \in \mathcal{I}} u_i(a_{i j}) = \max_{a_{i j}} \sum_{i \in \mathcal{I} \setminus \{0\}} u_i(a_{i j}) \\
            + \sum_{j \in \mathcal{J}} c_j - \sum_{j \in \widehat{\mathcal{J}}} c_j.
        \end{multline}
    Since $\sum_{j \in \mathcal{J}} c_j$ is constant by Assumption \ref{ASM:total-operating-cost}, it can be neglected from the maximization problem. Hence, by optimality, we have $\sum_{j \in \widehat{\mathcal{J}}} c_j = \sum_{j \in \mathcal{J}} \sum_{i \in \mathcal{I} \setminus \{0\}} c_{i j}(a_{i j})$, and since the series is finite, \eqref{EQN:prop1-third} becomes
        \begin{multline}\label{EQN:prop1-last}
            \max_{a_{i j}} \sum_{i \in \mathcal{I} \setminus \{0\}} u_i(a_{i j}) - \sum_{i \in \mathcal{I} \setminus \{0\}} \sum_{j \in \mathcal{J}} c_{i j}(a_{i j}) = \\
            \max_{a_{i j}} \sum_{i \in \mathcal{I} \setminus \{0\}} \left( v_i(a_{i j}) - t_i(a_{i j}) - c_{i j}(a_{i j}) \right),
        \end{multline}
    where in the last equation we have used the fact that $\sum_{i \in \mathcal{I} \setminus \{0\}} u_i(a_{i j}) = \sum_{i \in \mathcal{I} \setminus \{0\}} \sum_{j \in \mathcal{J}} u_i(a_{i j})$, and the result immediately follows.
\end{proof}

\begin{proposition}
    Let the traveler-vehicle profit allocation $\left( \pi_{i j}(a_{i j}), \rho_{i j}(a_{i j}) \right)_{i \in \mathcal{I}, j \in \mathcal{J}}$ under the traveler-vehicle assignment $\mathbf{a}$ of Problem \ref{PROB:centralized-problem} form a space, denoted by $\mathcal{S}$. Then, $\mathcal{S}$ is convex.
\end{proposition}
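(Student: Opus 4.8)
The plan is to prove convexity directly from the definition: I would take two arbitrary feasible profit allocations in $\mathcal{S}$ and show that every convex combination of them is again feasible, hence in $\mathcal{S}$. The crucial structural observation is that the assignment $\mathbf{a}$ of Problem \ref{PROB:centralized-problem} is held \emph{fixed} throughout, so the quantities $u_i(a_{i j})$ and $c_{i j}(a_{i j})$ appearing on the right-hand side of the feasibility equation \eqref{EQN:feasibility-equation} are constants. Consequently, each of the four conditions in Definition \ref{DEFN:feasible-outcomes} is either a linear equality or a linear inequality in the allocation variables $(\pi_{i j}, \rho_{i j})$, so $\mathcal{S}$ is the intersection of the associated hyperplanes and half-spaces.

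Concretely, let $(\pi, \rho)$ and $(\pi', \rho')$ be two allocations in $\mathcal{S}$, and fix $\lambda \in [0, 1]$. I would define the candidate allocation coordinate-wise by $\pi_{i j}^{\lambda} = \lambda \pi_{i j} + (1 - \lambda) \pi_{i j}'$ and $\rho_{i j}^{\lambda} = \lambda \rho_{i j} + (1 - \lambda) \rho_{i j}'$, and then verify the four conditions of Definition \ref{DEFN:feasible-outcomes} in turn. For condition (1), nonnegativity is preserved because a convex combination of nonnegative reals is nonnegative. For condition (2), taking $\lambda$ times the feasibility equation for $(\pi, \rho)$ plus $(1 - \lambda)$ times that for $(\pi', \rho')$ yields $\pi_{i j}^{\lambda} + \rho_{i j}^{\lambda} = u_i(a_{i j}) - c_{i j}(a_{i j})$, since the right-hand side is identical for both allocations under the fixed $\mathbf{a}$. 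Conditions (3) and (4) force certain coordinates to zero, and a convex combination of zeros is zero, so they too are inherited.

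The only point that requires care -- and what I regard as the main (if modest) obstacle -- is justifying that the right-hand side of \eqref{EQN:feasibility-equation}, the index set $\widehat{\mathcal{J}}$, and the set of unassigned travelers are the same for both allocations being combined. This is exactly where fixing $\mathbf{a}$ is essential: were the assignment allowed to vary, the binary variables $a_{i j}$ would render the feasible set nonconvex and the statement would fail. Once $\mathbf{a}$ is pinned down, every defining relation of $\mathcal{S}$ is affine, and the intersection of finitely many half-spaces and hyperplanes is convex, which completes the argument.
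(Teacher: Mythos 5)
Your argument is correct and follows the same route as the paper's proof, which simply observes that $\mathcal{S}$ is cut out by linear (affine) constraints and is therefore convex. You supply the details the paper omits --- in particular the key point that the assignment $\mathbf{a}$, and hence the right-hand sides of the feasibility conditions in Definition \ref{DEFN:feasible-outcomes}, are held fixed --- so your write-up is a more careful version of the same one-line argument.
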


\begin{proof}
    It is straightforward to see that the space of stable solutions $\mathcal{S}$ is defined by a set of linear constraints. Therefore, the space of stable solutions $\mathcal{S}$ is convex. %\cite{boyd2004}.
\end{proof}

%Next, we derive the necessary and sufficient conditions that a solution from $\mathcal{S}$ is an optimal solution of Problem \ref{PROB:centralized-problem}.

\begin{theorem}[Stability]
    %Let the traveler-vehicle profit allocation be $\left( \pi_{i j}(a_{i j}), \rho_{i j}(a_{i j}) \right)_{i \in \mathcal{I}, j \in \mathcal{J}}$ under the traveler-vehicle assignment $\mathbf{a}$ of Problem \ref{PROB:centralized-problem}.
    If $\left( \pi_{i j}(a_{i j}), \rho_{i j}(a_{i j}) \right)_{i \in \mathcal{I}, j \in \mathcal{J}}$ is stable, then $\mathbf{a}$ is an optimal assignment of Problem \ref{PROB:centralized-problem}.
\end{theorem}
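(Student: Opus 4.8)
The plan is to reduce optimality of $\mathbf{a}$ to the per-traveler inequality that \emph{defines} stability, using the equivalence established in Theorem \ref{PROP:equivalence} to rewrite the social welfare in a separable form. First I would invoke Theorem \ref{PROP:equivalence} to replace the objective $W(\mathbf{a}) = \sum_{i \in \mathcal{I}} u_i(a_{ij})$ by the mathematically equivalent objective $\sum_{i \in \mathcal{I} \setminus \{0\}} \left( u_i(a_{ij}) - c_{ij}(a_{ij}) \right)$, which differs from $W$ only by the fixed constant $\sum_{j \in \mathcal{J}} c_j$ guaranteed by Assumption \ref{ASM:total-operating-cost}. Consequently, $\mathbf{a}$ is optimal for Problem \ref{PROB:centralized-problem} if and only if it maximizes this adjusted, and crucially \emph{separable}, objective over all feasible assignments.

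Next I would argue by aggregation. Let $\mathbf{a}'$ be any feasible assignment of Problem \ref{PROB:centralized-problem}, and for each traveler $i \in \mathcal{I} \setminus \{0\}$ let $a_{ij}'$ denote the corresponding component of $\mathbf{a}'$. Since the allocation $\left( \pi_{ij}(a_{ij}), \rho_{ij}(a_{ij}) \right)$ is stable, Definition \ref{DEFN:stability} applied with this particular alternative yields, for every $i$,
\begin{equation}
    u_i(a_{ij}) - c_{ij}(a_{ij}) \geq u_i(a_{ij}') - c_{ij}(a_{ij}').
\end{equation}
Summing these $n$ inequalities over $i \in \mathcal{I} \setminus \{0\}$ gives
\begin{equation}
    \sum_{i \in \mathcal{I} \setminus \{0\}} \left( u_i(a_{ij}) - c_{ij}(a_{ij}) \right) \geq \sum_{i \in \mathcal{I} \setminus \{0\}} \left( u_i(a_{ij}') - c_{ij}(a_{ij}') \right),
\end{equation}
which by the equivalence of Theorem \ref{PROP:equivalence} is exactly $W(\mathbf{a}) \geq W(\mathbf{a}')$. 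As $\mathbf{a}'$ was an arbitrary feasible assignment, $\mathbf{a}$ attains the maximum of \eqref{EQN:prob1-objective}, i.e., it is an optimal assignment of Problem \ref{PROB:centralized-problem}.

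The feature that makes this work is that the stability condition of Definition \ref{DEFN:stability} is stated \emph{per traveler} and holds against \emph{any} alternative $a_{ij}'$; this is precisely what licenses summing over $i$ and recovering the global objective, since the components of any feasible $\mathbf{a}'$ are themselves admissible alternatives in the stability inequality. I do not expect a deep obstacle here, only careful bookkeeping: the main point to handle is the index-$0$ term together with unassigned travelers and vehicles. By parts (3) and (4) of Definition \ref{DEFN:feasible-outcomes} the corresponding profit contributions vanish, so they affect neither side of the aggregated inequality, and the identity $\pi_{ij}(a_{ij}) + \rho_{ij}(a_{ij}) = u_i(a_{ij}) - c_{ij}(a_{ij})$ from \eqref{EQN:feasibility-equation} ties the separable objective back to the realized profit allocation. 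Equivalently, one could argue by contradiction: were $\mathbf{a}$ not optimal, some feasible $\mathbf{a}'$ would satisfy $W(\mathbf{a}') > W(\mathbf{a})$, forcing strict inequality in at least one summand and hence violating the stability condition for at least one traveler.
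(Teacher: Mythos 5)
Your proof is correct and follows essentially the same route as the paper: both arguments aggregate the per-traveler stability inequality $u_i(a_{ij}) - c_{ij}(a_{ij}) \geq u_i(a_{ij}') - c_{ij}(a_{ij}')$ from Definition \ref{DEFN:stability} over $i \in \mathcal{I} \setminus \{0\}$ and identify the resulting sums with the social welfare of $\mathbf{a}$ and $\mathbf{a}'$. The only cosmetic difference is that you reuse the identity from Theorem \ref{PROP:equivalence} to make that identification in one step, whereas the paper redoes the bookkeeping directly over $\widehat{\mathcal{J}}$ with the $(1 - a_{0j})$ indicators and conditions (2)--(3) of Definition \ref{DEFN:feasible-outcomes}.
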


\begin{proof}
    Let $\mathbf{a}$ and $\mathbf{a} '$ denote two different traveler-vehicle assignments of Problem \ref{PROB:centralized-problem}. It is sufficient to consider the case where $\left( \pi_{i j}(a_{i j}), \rho_{i j}(a_{i j}) \right)_{i \in \mathcal{I}, j \in \mathcal{J}}$ is stable under $\mathbf{a}$ and only feasible under $\mathbf{a} '$. Then, we want to show that $\mathbf{a} '$ is not optimal. So, by Definition \ref{DEFN:stability}, we have
        \begin{align}\label{EQN:prop3-first}
            \pi_{i j}(a_{i j}) + \rho_{i j}(a_{i j}) & = u_i(a_{i j}) - c_{i j}(a_{i j}) \notag \\
            & \geq u_i(a_{i j} ') - c_{i j}(a_{i j} ').
        \end{align}
    We take the summation over $i \in \mathcal{I} \setminus \{0\}$ and $j \in \widehat{\mathcal{J}}$ of \eqref{EQN:prop3-first} as follows
        \begin{multline}\label{EQN:prop3-second}
            \sum_{i \in \mathcal{I} \setminus \{0\}} \sum_{j \in \widehat{\mathcal{J}}} \left( \pi_{i j}(a_{i j}) + \rho_{i j}(a_{i j}) \right) \geq \\
            \sum_{i \in \mathcal{I} \setminus \{0\}} \sum_{j \in \widehat{\mathcal{J}}} \left( u_i(a_{i j} ') - c_{i j}(a_{i j} ') \right).
        \end{multline}
    So, the RHS of \eqref{EQN:prop3-second} becomes
        \begin{multline}\label{EQN:prop3-third}
            \sum_{i \in \mathcal{I} \setminus \{0\}} \sum_{j \in \widehat{\mathcal{J}}} \left( u_i(a_{i j} ') - c_{i j}(a_{i j} ') \right) = \\
            \sum_{i \in \mathcal{I} \setminus \{0\}} \sum_{j \in \mathcal{J}} \left( u_i(a_{i j} ') - (1 - a_{0 j} ') \cdot c_{i j}(a_{i j} ') \right).
        \end{multline}
    By using conditions (ii) and (iii) in Definition \ref{DEFN:feasible-outcomes}, the LHS of \eqref{EQN:prop3-second} becomes
        \begin{multline}\label{EQN:prop3-fourth}
            \sum_{i \in \mathcal{I} \setminus \{0\}} \sum_{j \in \widehat{\mathcal{J}}} \left( \pi_{i j}(a_{i j}) + \rho_{i j}(a_{i j}) \right) = \\
            %= \sum_{i \in \mathcal{I} \setminus \{0\}} \sum_{j \in \mathcal{J}} \left( \pi_{i j}(a_{i j}) + \rho_{i j}(a_{i j}) \right).
            \sum_{i \in \mathcal{I} \setminus \{0\}} \sum_{j \in \mathcal{J}} \left( u_i(a_{i j}) - (1 - a_{0 j}) \cdot c_{i j}(a_{i j}) \right).
        \end{multline}
%The RHS of the above is equal to
%\begin{equation}
%\sum_{i \in \mathcal{I} \setminus \{0\}} \sum_{j \in \mathcal{J}} u_i(a_{i j}) - \sum_{j \in \mathcal{J}} (1 - a_{0 j}) \cdot c_j,
%\end{equation}
    Thus, substituting \eqref{EQN:prop3-third} and \eqref{EQN:prop3-fourth} into \eqref{EQN:prop3-second} yields
        \begin{multline}
            \sum_{i \in \mathcal{I} \setminus \{0\}} \sum_{j \in \mathcal{J}} \left( u_i(a_{i j}) - (1 - a_{0 j}) \cdot c_{i j}(a_{i j}) \right) \geq \\
            \sum_{i \in \mathcal{I} \setminus \{0\}} \sum_{j \in \mathcal{J}} \left( u_i(a_{i j} ') - (1 - a_{0 j} ') \cdot c_{i j}(a_{i j} ') \right),
        \end{multline}
    which simplifies to, for any assignment $a_{i j} '$,
        \begin{equation}
            \sum_{i \in \mathcal{I}} u_i(a_{i j}) \geq \sum_{i \in \mathcal{I}} u_i(a_{i j} '),
        \end{equation}
    since the summation over the $j \in \mathcal{J}$ is redundant. Hence, the social welfare under assignment $\mathbf{a}$ is greater or equal than the social welfare under $\mathbf{a} '$. Therefore, we conclude that if $\left( \pi_{i j}(a_{i j}), \rho_{i j}(a_{i j}) \right)_{i \in \mathcal{I}, j \in \mathcal{J}}$ is stable, then the assignment $\mathbf{a}$ is necessarily optimal.
\end{proof}

%Although we cannot guarantee the uniqueness of the optimal traveler-assignment assignment of Problem \ref{PROB:centralized-problem}, we can show the following result.

\begin{theorem}
    If there are two optimal assignments of Problem \ref{PROB:centralized-problem}, denoted by $\mathbf{a}$ and $\tilde{\mathbf{a}}$, respectively, then the resulted traveler-vehicle profit allocation $\left( \pi_{i j}(a_{i j}), \rho_{i j}(a_{i j}) \right)_{i \in \mathcal{I}, j \in \mathcal{J}}$ is feasible and stable under both assignments.
\end{theorem}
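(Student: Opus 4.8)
The plan is to exploit that two optimal assignments attain the same maximal social welfare, so that the profit allocation established for one can be transported to the other. First I would record the immediate consequence of optimality: since both $\mathbf{a}$ and $\tilde{\mathbf{a}}$ solve Problem \ref{PROB:centralized-problem}, we have $W(\mathbf{a}) = W(\tilde{\mathbf{a}})$, and by Theorem \ref{PROP:equivalence} this common value equals the maximal aggregate surplus $\sum_{i \in \mathcal{I} \setminus \{0\}} \sum_{j \in \mathcal{J}}\bigl(u_i(a_{ij}) - (1-a_{0j})\,c_{ij}(a_{ij})\bigr)$. I would then take the profit allocation $\left(\pi_{ij}, \rho_{ij}\right)$ induced by the market under $\mathbf{a}$, for which feasibility and stability hold by construction, and argue that the very same allocation remains feasible and stable relative to $\tilde{\mathbf{a}}$.

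For \emph{feasibility} under $\tilde{\mathbf{a}}$ I would verify the four conditions of Definition \ref{DEFN:feasible-outcomes} directly. Conditions (3) and (4), setting $\rho_{ij} = 0$ for idle vehicles and $\pi_{ij} = 0$ for unassigned travelers, together with the nonnegativity in condition (1), are properties of the payoff vector itself and are inherited once one checks that $\mathbf{a}$ and $\tilde{\mathbf{a}}$ leave the same players unassigned; this last fact follows from optimality, since an unmatched traveler or idle vehicle under one optimum cannot be matched profitably under the other without raising the objective. The surplus-splitting identity (2), namely $\pi_{ij} + \rho_{ij} = u_i(a_{ij}) - c_{ij}(a_{ij})$ for every pair matched by $\tilde{\mathbf{a}}$, is the substantive requirement and is deferred to the stability step, where it will emerge as an equality.

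For \emph{stability} I would proceed exactly as in the proof of the preceding Stability theorem. Stability of $\left(\pi_{ij}, \rho_{ij}\right)$ under $\mathbf{a}$ gives, by Definition \ref{DEFN:stability}, the per-traveler bound $\pi_{ij} + \rho_{ij} = u_i(a_{ij}) - c_{ij}(a_{ij}) \geq u_i(a_{ij}') - c_{ij}(a_{ij}')$ for every alternative assignment, and in particular for $\tilde{\mathbf{a}}$. Summing over $i \in \mathcal{I} \setminus \{0\}$ and $j \in \widehat{\mathcal{J}}$ and rewriting both sides with the aggregation identity used in \eqref{EQN:prop3-fourth} produces $\sum_{i \in \mathcal{I}} u_i(a_{ij}) \geq \sum_{i \in \mathcal{I}} u_i(\tilde{a}_{ij})$. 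But optimality of $\tilde{\mathbf{a}}$ forces the reverse inequality, so the aggregate is in fact an equality, and hence every summand of the per-traveler bound must bind. This simultaneously confirms the splitting identity (2) under $\tilde{\mathbf{a}}$, completing feasibility, and yields the inequalities of Definition \ref{DEFN:stability} with $\mathbf{a}$ and $\tilde{\mathbf{a}}$ interchanged, establishing stability under $\tilde{\mathbf{a}}$.

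The hard part will be precisely this interchangeability step: converting the aggregate equality of social welfare into the termwise equality of the individual surplus bounds. The delicate point is that Definition \ref{DEFN:stability} is phrased per traveler, so one must sum the inequalities, collapse the redundant summation over $j \in \mathcal{J}$ as in the previous proof, and then argue that a sum of nonnegative slacks which vanishes forces each slack to vanish --- all while keeping careful track of the dummy index $0$, the factor $(1-a_{0j})$, and the possibly differing matched sets $\widehat{\mathcal{J}}$ under the two assignments.
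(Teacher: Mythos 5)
Your proposal is correct and follows essentially the same route as the paper: reuse the summed stability inequality from the preceding Stability theorem, invoke optimality of $\tilde{\mathbf{a}}$ to force that aggregate inequality to bind, and read off feasibility condition (2) and stability under $\tilde{\mathbf{a}}$ from the resulting equality. You are in fact more explicit than the paper, which skips the termwise-equality step entirely; the only wobble is your claim that two optima must leave the same players unassigned (false in general when there are ties), but your own termwise argument makes that claim unnecessary, since it forces zero surplus for any player matched under one optimum and not the other.
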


\begin{proof}
    Let $\mathbf{a}$ and $\tilde{\mathbf{a}}$ denote two optimal assignment. What we have to show is that if $\left( \pi_{i j}(a_{i j}), \rho_{i j}(a_{i j}) \right)_{i \in \mathcal{I}, j \in \mathcal{J}}$ is stable under assignment $\mathbf{a}$ and feasible under $\tilde{\mathbf{a}}$, then it is also stable under $\tilde{\mathbf{a}}$. We follow the same arguments up until \eqref{EQN:prop3-second} to get
        \begin{multline}\label{EQN:prop4-first}
            \sum_{i \in \mathcal{I} \setminus \{0\}} \sum_{j \in \widehat{\mathcal{J}}} \left( \pi_{i j}(a_{i j}) + \rho_{i j}(a_{i j}) \right) \geq \\
            \sum_{i \in \mathcal{I} \setminus \{0\}} \sum_{j \in \widehat{\mathcal{J}}} \left( u_i(a_{i j} ') - c_{i j}(a_{i j} ') \right).
        \end{multline}
    Hence, we observe that if $\tilde{\mathbf{a}}$ is an optimal assignment, then by Definition \ref{DEFN:stability}, \eqref{EQN:prop4-first} will hold at equality. Thus, the feasibility equation \eqref{EQN:feasibility-equation} is satisfied. Therefore, under the optimal assignment $\tilde{\mathbf{a}}$ we conclude that $\left( \pi_{i j}(a_{i j}), \rho_{i j}(a_{i j}) \right)_{i \in \mathcal{I}, j \in \mathcal{J}}$ is stable.
\end{proof}

\begin{proposition}
    If there are two arbitrary travelers with the same needs that are assigned to different vehicles, then there is no difference in their utility.
\end{proposition}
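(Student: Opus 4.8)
The plan is to use the stability condition of Definition \ref{DEFN:stability} twice, once from the viewpoint of each traveler, and to exploit the hypothesis that the two travelers share the same needs in order to identify the corresponding quantities across them. Let $i_1, i_2 \in \mathcal{I}$ be the two travelers with identical needs, assigned under the (stable) allocation to distinct vehicles $j_1, j_2 \in \mathcal{J}$, respectively. First I would make ``same needs'' precise: a traveler's behaviour is fully captured by its valuation $v(\cdot)$ (equivalently by $\bar v$ and the inconvenience cost $\phi(\cdot)$), by the bound $\underline v$, and by the cost-share rule $c(\cdot)$. Two travelers with the same needs therefore induce the \emph{same} map from an assigned vehicle to the net quantity $u(\cdot) - c(\cdot)$. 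Concretely, relocating $i_1$ onto $j_2$ reproduces exactly what $i_2$ realises there, and symmetrically, so that
\[
u_{i_1}(a_{i_1 j_2}) - c_{i_1 j_2}(a_{i_1 j_2}) = u_{i_2}(a_{i_2 j_2}) - c_{i_2 j_2}(a_{i_2 j_2}),
\]
\[
u_{i_2}(a_{i_2 j_1}) - c_{i_2 j_1}(a_{i_2 j_1}) = u_{i_1}(a_{i_1 j_1}) - c_{i_1 j_1}(a_{i_1 j_1}).
\]

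Next I would invoke stability. Applying \eqref{EQN:defn-2} to traveler $i_1$, whose current assignment is to $j_1$, with the alternative being assignment to $j_2$, gives $u_{i_1}(a_{i_1 j_1}) - c_{i_1 j_1}(a_{i_1 j_1}) \ge u_{i_1}(a_{i_1 j_2}) - c_{i_1 j_2}(a_{i_1 j_2})$; substituting the first same-needs identity, the right-hand side becomes $u_{i_2}(a_{i_2 j_2}) - c_{i_2 j_2}(a_{i_2 j_2})$. Applying \eqref{EQN:defn-2} symmetrically to traveler $i_2$ (current assignment $j_2$, alternative $j_1$) and substituting the second identity yields the reverse inequality. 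Sandwiching the two forces
\[
u_{i_1}(a_{i_1 j_1}) - c_{i_1 j_1}(a_{i_1 j_1}) = u_{i_2}(a_{i_2 j_2}) - c_{i_2 j_2}(a_{i_2 j_2}),
\]
i.e., the two travelers realise the same cost-adjusted utility, which by the equivalence of Theorem \ref{PROP:equivalence} is precisely the notion of utility that is operative in this market. If, in addition, the pricing and cost-sharing are taken to be vehicle-dependent but traveler-independent for same-needs travelers, then $t$ and $c$ cancel identically and the raw utilities coincide as well.

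The step I expect to be the main obstacle, and the one deserving the most care, is the formalisation of ``same needs'' that legitimises the cross-traveler substitution: I need the valuation, the bounds $\underline v, \bar v$, and especially the cost-share rule to depend only on the assigned vehicle and not on the identity of the traveler, so that placing $i_1$ on $j_2$ genuinely reproduces $i_2$'s outcome. Once that identification is granted, the rest is a routine two-sided application of the stability inequality. A secondary point worth checking is that reassigning $i_1$ to $j_2$ is admissible as the ``alternative assignment $a_{i j}'$'' of Definition \ref{DEFN:stability}; since that definition quantifies over \emph{arbitrary} alternatives, no capacity or feasibility verification is needed for the comparison, which keeps the argument clean.
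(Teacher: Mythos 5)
Your proof is correct in substance but takes a genuinely more direct route than the paper's. You work with a single stable allocation and apply the stability inequality \eqref{EQN:defn-2} twice --- once per traveler, each time against the hypothetical alternative of sitting in the other's vehicle --- and then use the same-needs identification $u_{i_1}(a_{i_1 j_2}) - c_{i_1 j_2}(a_{i_1 j_2}) = u_{i_2}(a_{i_2 j_2}) - c_{i_2 j_2}(a_{i_2 j_2})$ (and its mirror) to sandwich the two cost-adjusted utilities into equality. The paper instead posits \emph{two optimal assignments} $\mathbf{a}$ and $\mathbf{a}'$ in which the travelers swap vehicles, writes out eight stability conditions (equalities at the realized matches, inequalities at the alternatives, each summed over the remaining travelers $\ell$), and extracts the conclusion $\pi_{i j}(a_{i j}) = \pi_{i' j}(a_{i' j})$ from that system; your two-inequality sandwich is the same underlying Shapley--Shubik-style mechanism stripped of the summations and the second assignment, and it lands on equality of $u - c$ rather than of $\pi$, which arguably matches the proposition's literal claim about ``utility'' more closely. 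Your two flagged caveats are both well placed and apply equally to the paper's own argument: (i) ``same needs'' must be formalized so that the cross-traveler substitution is legitimate --- the paper does exactly this by asserting $u_i(a_{i j}) = u_{i'}(a_{i' j})$ and $u_i(a_{i j'}) = u_{i'}(a_{i' j'})$ without further justification --- and (ii) passing from equality of the cost-adjusted quantities to equality of the raw utilities requires $c_{i_1 j_1}(a_{i_1 j_1}) = c_{i_2 j_2}(a_{i_2 j_2})$ across the two distinct vehicles, which neither you nor the paper can get for free; your explicit acknowledgment of this residual gap is a point in your favor rather than a defect relative to the published proof.
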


\begin{proof}
    Suppose there are two travelers $i, i ' \in \mathcal{I}$ with the same needs and two vehicles $j, j ' \in \mathcal{J}$. We want to show that in our market both travelers will receive the same utility even under different assignments. So, we assume that there are two assignments $\mathbf{a}$ and $\mathbf{a} '$, where in $\mathbf{a}$ traveler $i \in \mathcal{I}$ is assigned to vehicle $j \in \mathcal{J}$ while in $\mathbf{a} '$ traveler $i \in \mathcal{I}$ is assigned to vehicle $j '$. Similarly, for traveler $i '$. For an optimal $\mathbf{a}$, the stability conditions of $\left( \pi_{i j}(a_{i j}), \rho_{i j}(a_{i j}) \right)_{i \in \mathcal{I}, j \in \mathcal{J}}$ are
        \begin{multline}\label{EQN:prop5-first}
            \pi_{i j}(a_{i j}) + \sum_{\ell \in \mathcal{I} \setminus \{i\}} \left( \pi_{\ell j}(a_{\ell j}) + \rho_{i j}(a_{i j}) \right) = \\
            u_{i}(a_{i j}) + \sum_{\ell \in \mathcal{I} \setminus \{i\}} \left( u_i(a_{\ell j}) - c_{\ell j}(a_{\ell j}) \right),
        \end{multline}
        
        \begin{multline}
            \pi_{i j}(a_{i j}) + \sum_{\ell \in \mathcal{I} \setminus \{i '\}} \left( \pi_{\ell j}(a_{\ell j}) + \rho_{i j '}(a_{i j '}) \right) \geq \\
            u_{i}(a_{i j '}) + \sum_{\ell \in \mathcal{I} \setminus \{i '\}} \left( u_i(a_{\ell j '}) - c_{\ell j '}(a_{\ell j '}) \right),
        \end{multline}
    Similarly, for traveler $i '$, we have the following:
        \begin{multline}
            \pi_{i ' j}(a_{i ' j}) + \sum_{\ell \in \mathcal{I} \setminus \{i '\}} \left( \pi_{\ell j}(a_{\ell j}) + \rho_{i j '}(a_{i j '}) \right) \geq \\
            u_{i}(a_{i ' j '}) + \sum_{\ell \in \mathcal{I} \setminus \{i '\}} \left( u_i(a_{\ell j '}) - c_{\ell j '}(a_{\ell j '}) \right),
        \end{multline}
        
        \begin{multline}
            \pi_{i ' j}(a_{i ' j}) + \sum_{\ell \in \mathcal{I} \setminus \{i\}} \left( \pi_{\ell j}(a_{\ell j}) + \rho_{i j}(a_{i j}) \right) = \\
            u_{i}(a_{i ' j}) + \sum_{\ell \in \mathcal{I} \setminus \{i\}} \left( u_i(a_{\ell j}) - c_{\ell j}(a_{\ell j}) \right),
        \end{multline}
    In a similar way, we can argue that since $\mathbf{a} '$ is optimal, the stability conditions of $\left( \pi_{i j}(a_{i j}), \rho_{i j}(a_{i j}) \right)_{i \in \mathcal{I}, j \in \mathcal{J}}$ are
        \begin{multline}
            \pi_{i j}(a_{i j}) + \sum_{\ell \in \mathcal{I} \setminus \{i\}} \left( \pi_{\ell j}(a_{\ell j}) + \rho_{i j '}(a_{i j '}) \right) = \\
            u_{i}(a_{i j '}) + \sum_{\ell \in \mathcal{I} \setminus \{i\}} \left( u_i(a_{\ell j '}) - c_{\ell j '}(a_{\ell j '}) \right),
        \end{multline}
        
        \begin{multline}
            \pi_{i j}(a_{i j}) + \sum_{\ell \in \mathcal{I} \setminus \{i '\}} \left( \pi_{\ell j}(a_{\ell j}) + \rho_{i j}(a_{i j}) \right) \geq \\
            u_{i}(a_{i j}) + \sum_{\ell \in \mathcal{I} \setminus \{i '\}} \left( u_i(a_{\ell j}) - c_{\ell j}(a_{\ell j}) \right),
        \end{multline}
    Similarly, for traveler $i '$, we have the following:
        \begin{multline}
            \pi_{i ' j}(a_{i ' j}) + \sum_{\ell \in \mathcal{I} \setminus \{i '\}} \left( \pi_{\ell j}(a_{\ell j}) + \rho_{i j}(a_{i j}) \right) = \\
            u_{i}(a_{i ' j}) + \sum_{\ell \in \mathcal{I} \setminus \{i '\}} \left( u_i(a_{\ell j}) - c_{\ell j}(a_{\ell j}) \right),
        \end{multline}

        \begin{multline}
            \pi_{i ' j}(a_{i ' j}) + \sum_{\ell \in \mathcal{I} \setminus \{i\}} \left( \pi_{\ell j}(a_{\ell j}) + \rho_{i j '}(a_{i j '}) \right) \geq \\
            u_{i}(a_{i ' j '}) + \sum_{\ell \in \mathcal{I} \setminus \{i\}} \left( u_i(a_{\ell j '}) - c_{\ell j '}(a_{\ell j '}) \right). \label{EQN:prop5-last}
        \end{multline}
    Recall that both travelers $i, i ' \in \mathcal{I}$ have the same needs. Thus, $u_i(a_{i j}) = u_{i '}(a_{i ' j})$ and $u_i(a_{i j '}) = u_{i '}(a_{i ' j '})$. Therefore, from \eqref{EQN:prop5-first} - \eqref{EQN:prop5-last}, it follows that $\pi_{i j}(a_{i j}) = \pi_{i ' j}(a_{i ' j})$.
\end{proof}

\section{Conclusions}\label{SEC:conclusion}

In this paper, we provided an answer to the question of how one can ensure a socially-acceptable assignment between travelers and the shared vehicles' operators. We focused on the behavioral decision-making of both the travelers and the vehicles' operators and designed a shared mobility market consisted of travelers and vehicles in a transportation network. We formulated a linear programming problem and derived necessary and sufficient conditions for its solution to be an assignment between travelers and vehicles that cannot be improved any further. Consequently, we showed that our optimal assignment maximizes the social welfare of all travelers, and ensures the feasibility and stability of the traveler-vehicle profit allocation while respecting the decision-making of both the travelers and the vehicles' operators.

Ongoing work includes the design of a mobility market with strategic decision-making and by relaxing the assumption that travelers have no other alternative modes of transportation. An interesting research direction would involve to extend and enhance the traveler-behavioral model, motivated by a social-mobility survey. The objective with such a survey would be to observe any correlations between behavioral tendencies or attitudes of travelers and how they use shared vehicles (e.g., Uber, Lyft, taxicabs). Some unanswered questions that we plan to address include ``In emerging mobility systems, how likely are people to share CAVs?" and ``Will CAVs play any role or have a significant impact on travelers' tendencies and behavior?"

\addtolength{\textheight}{-12cm}   % This command serves to balance the column lengths
                                  % on the last page of the document manually. It shortens
                                  % the textheight of the last page by a suitable amount.
                                  % This command does not take effect until the next page
                                  % so it should come on the page before the last. Make
                                  % sure that you do not shorten the textheight too much.

%%%%%%%%%%%%%%%%%%%%%%%%%%%%%%%%%%%%%%%%%%%%%%%%%%%%%%%%%%%%%%%%%%%%%%%%%%%%%%%%

%%%%%%%%%%%%%%%%%%%%%%%%%%%%%%%%%%%%%%%%%%%%%%%%%%%%%%%%%%%%%%%%%%%%%%%%%%%%%%%%

%%%%%%%%%%%%%%%%%%%%%%%%%%%%%%%%%%%%%%%%%%%%%%%%%%%%%%%%%%%%%%%%%%%%%%%%%%%%%%%%
%\section*{APPENDIX}

%Appendixes should appear before the acknowledgment.

%\section*{ACKNOWLEDGMENT}

%The preferred spelling of the word ÒacknowledgmentÓ in America is without an ÒeÓ after the ÒgÓ. Avoid the stilted expression, ÒOne of us (R. B. G.) thanks . . .Ó  Instead, try ÒR. B. G. thanksÓ. Put sponsor acknowledgments in the unnumbered footnote on the first page.

%%%%%%%%%%%%%%%%%%%%%%%%%%%%%%%%%%%%%%%%%%%%%%%%%%%%%%%%%%%%%%%%%%%%%%%%%%%%%%%%

\bibliographystyle{IEEEtran}
\bibliography{references}

\end{document}